\newtheorem{definition}{Definition}
\newtheorem{lemma}{Lemma}
\newcommand{\SubAlgo}[2]{#1 \SubAlgoBlock{#2}}
\let\oldnl\nl
\newcommand{\nonl}{\renewcommand{\nl}{\let\nl\oldnl}}
\newcommand{\concat}{\oplus}
\title{Wait-Free Universality of Consensus in the Infinite Arrival Model}
\author{Grégoire Bonin\\LS2N, Université de Nantes, France\\\url{gregoire.bonin@etu.univ-nantes.fr}
  \and Achour Mostéfaoui\\LS2N, Université de Nantes, France\\\url{achour.mostefaoui@univ-nantes.fr}
  \and Matthieu Perrin\\LS2N, Université de Nantes, France\\\url{matthieu.perrin@univ-nantes.fr}
}
\date{}
\begin{document}

\maketitle

\begin{abstract}
  In classical asynchronous distributed systems composed of a fixed number
  $n$ of processes where some proportion may fail by crashing, 
  many objects do not have a wait-free linearizable implementation (e.g. stacks,
  queues, etc.). It has been proved that consensus is universal in such systems,
  which means that this system
  augmented with consensus objects allows to implement any object that has
  a sequential specification. To this end, many universal constructions
  have been proposed in systems augmented with consensus objects or with
  different equivalent objects or special hardware instructions (compare\&swap, fetch\&add,
  etc.). In this paper, we consider a more general system model called
  infinite arrival model where infinitely many processes may arrive and
  leave or crash during a run. We prove that consensus is still universal
  in this more general model. For that, we propose a universal
  construction. As a first step we build a weak log for which we propose
  two implementations using consensus objects for the first and the
  compare\&swap special instruction for the other.

  \paragraph{Key-words.}
  Concurrent object, Consensus, Infinite arrival model, Linearizability, Universal construction, Wait-freedom.
\end{abstract}

\section{Introduction}

The question of universality has always been central in all areas. What can be done with a given tool and a context, and more importantly what cannot be done with such tool. In sequential computing, universality is represented by a Turing machine that can compute all that is computable. In the context of distributed systems, we have known since 1985 and the famous FLP impossibility result that the consensus problem has no deterministic solution in a distributed system where even one process can fail by crashing \cite{FLP85}. This impossibility is not due to the computing power of the individual processes, but rather to the difficulty of coordination between the different processes that compose the distributed system. Coordination and agreement problems are thus at the heart of computability in distributed systems \cite{HRR13}.

A distributed system can be abstracted as a set of processes accessing concurrently a set of concurrent objects. The implementation of these objects are based on read/write registers and hardware instructions. Searching for correct and efficient implementations of usual objects (queues, stacks, etc.) is far from being trivial \cite{HS08, R13, T18} when the system is failure prone. Intuitively, a ``good'' implementation of a concurrent object has to satisfy two kinds of properties: a consistency condition and a progress condition.  The consistency condition specifies the safety property that is the meaningfulness of the returned results, and the progress condition specifies the guarantees on the liveness.

Linearizability  \cite{HW90} is a consistency criterion. It ensures that all the operations in a distributed history appear as if they were executed sequentially: each operation appears at a single point in time, between its start event and its end event. Such a consistency criterion gives the illusion to the processes to access a physical concurrent object. However, such implementations are often costly, when not impossible. 

\begin{definition}[Linearizability]
    An execution $\alpha$ is \emph{linearizable} if 
    all operations return the same value as if they occurred instantly at some instant, 
    called the \emph{linearization point}, between their invocation and their response, 
    possibly after removing some non-terminated operations. 
\end{definition}

The use of locks in the implementation may cause blocking in a system where processes can crash. Prohibiting the use of locks leads to several progress conditions, among which wait-freedom \cite{herlihy1991wait} and lock-freedom \cite{HW90}. While wait-freedom guarantees that every operation terminates after a finite time, lock-freedom guarantees that, if the computation run for long enough, at least one process makes progress (this may lead some processes to starve). Wait-freedom is thus stronger than lock-freedom: while lock-freedom is a system-wide condition, wait-freedom is a per-process condition.

\begin{definition}[Wait-freedom]
  An execution $\alpha$ is \emph{wait-free} if no operation takes an infinite number of
  steps in $\alpha$.
\end{definition}

Maurice Herlihy proved in \cite{H88} that consensus is universal in classical distributed systems composed of a set of $n$ processes. Namely, any object having a sequential specification has a wait-free implementation using only read/write registers (memory locations) and some number of consensus objects. 
 
For proving the universality of consensus, Herlihy introduced the notion of universal construction\footnote{A small guided tour on universal constructions can be found in \cite{R17}.}. It is a generic algorithm that, given a sequential specification of any object whose operations are total\footnote{This means that any operation on the object can be called and the call returns regardless of the state of the object.}, provides a concurrent implementation of this object. Since then, many universal constructions have been proposed for several objects, assuming the availability of hardware special instructions that provide the same computing power as consensus, like compare\&swap (CAS), Load-Link/Store-Conditional (LL/SC) etc.

This last decade, first with peer-to-peer systems, and then with multi-core machines and the multi-threading model, the assumption of a closed system with a fixed number $n$ of processes and where every process knows the identifiers of all processes became too restrictive. Hence the infinite arrival model introduced in \cite{merritt2003resilient}. In this model, any number of processes can crash (or leave, in a same way as in the other model), but any number (be it finite or not) of processes can also join the network. When a process joins such a system, it is not known to the already running processes, so no fixed number of processes can be used in the implementations as a parameter. Three kinds of arrival models can be distinguished: the bounded arrival model in which at most $N$ processes may participate, the unbounded arrival model in which a finite number of processes participate in each execution but the number of participants is unknown to them, and the infinite arrival model where new processes may keep arriving during the whole execution. Let us note that, at any time, the number of processes that have already joined the system is finite, but can be infinitely growing.
The classical system is part of the bounded arrival model where all the processes arrive at once.
In this article, we focus on the infinite arrival model. 

\subparagraph{Problem statement} 

The aim of this paper is to extend universality of consensus to the infinite arrival model. Solutions to the consensus problem have already been investigated for the infinite arrival model~\cite{aspnes2002wait, chockler2005active, merritt2003resilient} and consensus has been used as a base for reasoning about computability in this model~\cite{afek2011bounded}.
 The question is thus ``is it possible to build a universal wait-free linearizable construction based on consensus objects and read/write atomic registers?'' This is not trivial for different reasons. First, although the lock-free universal constructions still work in the infinite arrival model because they ensure a global progress condition, this is no more the case for wait-free universal constructions.
Second, wait-free implementations rely on what is called help mechanism, that has been recently formalized in \cite{censor2015help}. This mechanism requires any process, before terminating its operation, to help processes having pending operations, in order to reach wait-freedom. One of the difficulties in the infinite arrival model is that helping is not obvious. Indeed, helping requires at least that a process needing to be helped is able to announce its existence to other processes willing to help it. 
Due to the infinite number of potential participating processes over time, it is not reasonable to assume that each process can write in a dedicated register, and to require helping processes to read them all.
When only consensus and read/write registers are accessible to a process, a newly arriving process must compete with a potentially infinite number of other
arriving processes on either a consensus object or a same memory location; and may fail on all its attempts.

\subparagraph{Contributions}
This paper has two main contributions.

\begin{itemize}
\item Similarly to \cite{FK14} which first proposes a Collect object that will be used as a building block for a universal construction, we first propose a construction that implements a weak log object. This log is used as a list of presence where a process that arrives registers. We propose two implementations of a weak log using respectively consensus objects and the special hardware instructions compare\&swap.
\item We propose a universal construction based on consensus objects and the weak log object in the infinite arrival model where, moreover, processes are anonymous. This  proves that consensus is universal even in this model.
\end{itemize}

\subparagraph{Organization of the paper}

The remainder of this paper is organized as follows.
We first present the infinite arrival model in Section~\ref{section:system}.
Then, Section~\ref{section:weakLog} introduces a new abstraction, called
the weak log. In Section~\ref{section:universal}, we show how the weak log can
be used together with consensus to implement a wait-free universal construction.
In Section~\ref{section:consensus}, we give an implementation of the
weak log using consensus objects.
In Section~\ref{section:CAS}, we discuss a simpler algorithm that
uses compare\&swap instead of consensus to implement a weak log.
Finally, Section~\ref{section:conclusion} concludes the paper.

\section{System Model}
\label{section:system}

The infinite arrival model is composed of an infinite set of processes called $p_0, p_1, \dots$
Processes communicate through an infinite memory composed of unbounded 
registers \footnote{Memory addresses of an infinite memory are unbounded, so this assumption is necessary to store references.}.
Processes can only access atomic memory locations through references or thanks to a memory allocation mechanism that creates a finite 
number of memory locations when it is invoked. In other words, it is not possible to create an infinite array in this model. 
Memory is composed of three kinds of registers: immutable registers, read/write registers and consensus objects defined thereafter. 
\begin{description}
\item[An immutable register] $x$ is initialized with a fixed value and cannot be modified later. It provides a read operation, 
  simply denoted by $x$ in the algorithms, that returns the internal value of the register.
\item[A read/write register] $x$ provides two operations: a read operation $x.\Read()$
  that returns the internal value of the register, and a write operation $x.\Write(v)$ that 
  replaces the current value of $x$ by $v$.
\item[A consensus object] $x$ provides two operations: a read operation similar to the read operation on an immutable register, and
    an operation $x.\Propose(v)$ that atomically checks if the current value is
  $\bot$, then sets the value of $x$ to $v$ if it is the case, and finally returns the value of $x$. In other words, 
  the first proposed value is written on $x$, which becomes immutable from this point on \footnote{
    This definition is close to the \emph{sticky bit} and is not exactly the same as commonly accepted 
    definitions of consensus. However, it is easy to implement it using a regular consensus task and 
    one read/write register, initialized to $\bot$ and written when the consensus is decided. We use 
    this definition for the sake of clarity of the proposed algorithms. 
  }.
\end{description}
  
An execution in the infinite arrival model is a (finite or infinite) sequence of steps. In each step, one process
executes either a local transition of its algorithm, accesses an operation available on a register, or returns a value.
A process that has already returned a value cannot execute a step afterwards.
Processes are asynchronous, in the sense that there is no constraint on which process takes each step of an execution:
a process may take an unbounded but finite number of consecutive steps, or wait an unbounded but finite number of other processes' steps
between two of its own steps. Moreover, it is possible that a process stops taking steps at some point 
in the execution even if it has not returned yet, which is similar to a crash in the classical model.

We say that a process $p_i$ \emph{arrives} in an execution at the time of its first step during this execution.
Remark that, although the number of processes in the system is infinite, the number of processes that have arrived
into the system at any step is finite. 

Processes are anonymous, in the sense that they do not know their unique identifier: 
this hypothesis models the fact that it would be unreasonably costly to base an algorithm 
on unbounded identifiers, especially when the attribution of identifiers is not controlled by
the arrival order of processes in the system. Moreover, by assuming anonymous processes, the result becomes more general.

\section{The Weak Log Abstraction} 
\label{section:weakLog}
As explained previously, the main difficulty in building a wait-free universal construction in the infinite arrival model resides in replacing helping mechanisms already in place in algorithms for the finite arrival model. In this section, we introduce the weak log, an abstraction that allows each process to announce its own value. 

In the infinite arrival model, long-lived linearizable objects can be expressed as distributed tasks, as a process that invokes several operations on a long-lived object can be modelled as several processes each issuing a unique operation~\cite{castaneda:hal-01660646}. We now define the weak log abstraction as a distributed task. 

In an instance of the weak log, each process $p_i$ proposes a value through an operation $\Append(v_i)$, that returns the sequence of all the values previously appended. The weak log is wait-free 
but not linearizable, 
in the sense that there might be no inclusion between the sequences returned by different processes. 
Instead, it is specified that the value proposed by a correct process will eventually appear in all subsequent sequences.
Moreover, the order of the values inside sequences is consistent over the different sequences returned by all processes. 

\begin{definition}[weak log]
  All processes $p_0, p_1, \dots$ propose distinct values $v_0, v_1, \dots$ by invoking $\Append(v_i)$,
  that returns a finite sequence $w_i = w_{i,1} \cdot w_{i,2} \cdots w_{i,|w_i|}$ such that:

  \begin{description}
  \item[Validity] All values in a sequence $w_i$ have been appended by some process: 
    $\forall j, \exists k : w_{i,j} = v_k.$
  \item[suffixing] If process $p_i$ terminates its invocation, then its value is appended at the end of its returned sequence:
    $w_{i,|w_i|} = v_i$ 
    \item[Total order] If two processes $p_i$ and $p_j$ terminate their invocations,
    then all pairs of values that $w_i$ and $w_j$ both contain appear in the same order:
    there is no $k_i, k_j, l_i, l_j$ such that $w_{i, k_i} = w_{j, k_j}$, $w_{i, l_i} = w_{j, l_j}$, $k_i \le k_j$ and $l_i > l_j$.
  \item[Eventual visibility] If some process $p_i$ terminates its invocation, then, eventually,
    all processes terminating their invocation will return a sequence containing $v_i$. In other words, the number of
    returned sequences that do not contain $v_i$ is finite.
  \item[Wait-freedom :] No process takes an infinite number of steps in an execution.
  \end{description}

\end{definition}

\section{A Universal Construction}
\label{section:universal}

Algorithm \ref{ALGO:waitfreelogCons} presents a universal construction using a weak log and consensus objects.
This algorithm is similar to the one presented in \cite{HS08}, except that the array of
single-writer/multiple-reader registers used by processes to announce their operations
is replaced by a weak log. The shared object to implement is represented by an initial state
\InitialState and a set of operations that change the state of the object and return a value. 

Processes share two variables: 
\begin{itemize}
\item \Announcements is a weak log in which processes append their own invocation;
\item \Operations is a consensus object at the tail of a linked list of operations. The list is a succession of nodes 
of the form $\langle v, \Cons\rangle$, where $v$ is the invocation of a process and $\Cons$ is a consensus object, 
referencing another node after the consensus has been won by some process. 
\end{itemize}

When process $p_i$ calls $\Apply(\Invok)$, it first appends $\Invok$ to \Announcements and obtains a list $\ToHelp_i$ of 
invocations in return. Then, it attempts to insert invocations of $\ToHelp_i$ at the end of the list \Operations
until all the invocations of \Announcements have been inserted. While traversing the list, it maintains a state $\State_i$ of the implemented object, initialized to $\InitialState$ and on which all invocations are applied in their order of appearance in the list. 

\begin{algorithm}
    \SubAlgo{\Operation $\Apply(\Invok)$ \Is}{
        $\ToHelp_i \gets \Announcements.\Append(\Invok)$\;
        $\Cons_i \leftarrow \Operations$;
        $\State_i \leftarrow \InitialState$\;
        \While{$\ToHelp_i\neq \varepsilon$}{
                $\langle \Winner_i, \Cons_i\rangle \leftarrow \Cons_i.\Propose(\langle \ToHelp_i[0], \bot \rangle)$\label{algo:waitfreelogCons:line:propose}\;
                $\ToHelp_i \leftarrow \ToHelp_i \setminus \Winner_i$\label{algo:waitfreelogCons:line:remove}\;
                \lIf{$\Winner_i=\Invok$}{
                    $\Result_i \leftarrow \State_i.\Invoke(\Winner_i)$\label{algo:waitfreelogCons:line:exec1}
                }\lElse{
                $\State_i.\Invoke(\Winner_i)$\label{algo:waitfreelogCons:line:exec2}%
                }
            }
        \Return $\Result_i$\;
    }
  \caption{Wait-free universal construction in the infinite arrival model}
  \label{ALGO:waitfreelogCons}
\end{algorithm}

We now prove that Algorithm \ref{ALGO:waitfreelogCons} is linearizable and wait-free.
Linearizability is achieved by Algorithm \ref{ALGO:waitfreelogCons} in the same way as in 
\cite{HS08}, so the proof of Lemma~\ref{Lemma:universalLinearizable} is similar.

\begin{lemma}[Linearizability]
    \label{Lemma:universalLinearizable}
    All executions admissible by Algorithm~\ref{ALGO:waitfreelogCons} are linearizable.
\end{lemma}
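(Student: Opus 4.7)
The plan is to define a linearization of each admissible execution from the chain of consensus objects rooted at $\Operations$, and then verify the linearizability requirements.

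First, I would observe that the sequence of consensus objects in the chain starting from $\Operations$, together with the winning proposal at each consensus, defines a totally ordered sequence of invocations $\sigma = \Invok_1, \Invok_2, \dots$ This ordering is objective: once a consensus object is decided its winner is fixed forever, and every process that traverses the chain sees the same prefix. The candidate linearization point of a terminating invocation $\Invok$ is chosen to be the step at which its winning $\Propose$ at line~\ref{algo:waitfreelogCons:line:propose} is decided.

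Next, I would check that each such linearization point falls between the invocation and the response of the corresponding $\Apply(\Invok)$. For the invocation side: if $\Invok$ wins some consensus $C$, then some process $p_j$ must have executed $C.\Propose(\langle\Invok,\bot\rangle)$, which required $\Invok$ to appear in $p_j$'s $\ToHelp_j$ list. By Validity of the weak log, the invoker of $\Invok$ must therefore have already started $\Apply(\Invok)$ and performed $\Announcements.\Append(\Invok)$, so the linearization point occurs after the invocation. For the response side: by Suffixing, when the invoker $p_i$ obtains $\ToHelp_i$, the value $\Invok$ appears at the end of $\ToHelp_i$; the outer $\mathtt{while}$ loop terminates only once $\ToHelp_i$ becomes empty, which in particular requires $\Invok$ to be removed at line~\ref{algo:waitfreelogCons:line:remove}, i.e., some $\Propose$ must have returned $\Invok$ as $\Winner$. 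Hence the linearization point precedes the response.

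Then I would argue that the value returned by each terminating invocation is consistent with applying $\sigma$ sequentially to $\InitialState$. Every process traverses the same chain in the same order, initializes $\State_i$ to $\InitialState$, and applies every winning invocation at lines~\ref{algo:waitfreelogCons:line:exec1}--\ref{algo:waitfreelogCons:line:exec2}. Consequently, the state $\State_i$ just before $p_i$'s own winning $\Invoke$ at line~\ref{algo:waitfreelogCons:line:exec1} is exactly the state obtained by sequentially applying the prefix of $\sigma$ preceding $\Invok$. By determinism of the sequential specification, $\Result_i$ is the return value prescribed by the linearization.

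The main obstacle is not the overall logical structure, which mirrors the classical argument of consensus-based universal constructions as in~\cite{HS08}, but rather the careful interaction with the non-linearizable weak log: I must ensure that the linearization points can be justified solely from the local Validity and Suffixing properties attached to each $\Apply$ call, since the weak log does not itself provide a global ordering on its $\Append$ operations.
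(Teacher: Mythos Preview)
Your approach is essentially the paper's: define linearization points from the consensus chain rooted at $\Operations$, use Validity of the weak log for the invocation-side bound, Suffixing for the response-side bound, and observe that every process replays the chain in the same order from $\InitialState$ to justify the returned values.

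There is one gap you should close. You implicitly assume that each $\Invok$ appears at most once in the chain $\sigma$, but you never argue it. Without uniqueness, $\sigma$ is not a legal linearization (an operation would be listed twice), the phrase ``the prefix of $\sigma$ preceding $\Invok$'' is ambiguous, and $\Result_i$ would be overwritten at the later occurrence on line~\ref{algo:waitfreelogCons:line:exec1}, so the value actually returned need not match the state after the first occurrence. The paper handles this explicitly before defining the linearization point: if $\Invok$ were the winner at two consensus objects along the chain, then whichever process proposed $\Invok$ at the later one must have already traversed the earlier one, at which point line~\ref{algo:waitfreelogCons:line:remove} removed $\Invok$ from its $\ToHelp$ list, so it could not have proposed $\Invok$ again---a contradiction. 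Adding this short uniqueness argument makes your proof complete and brings it in line with the paper's.
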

\begin{proof}
  Let $\alpha$ be an execution admissible by Algorithm~\ref{ALGO:waitfreelogCons}.

  Let us first remark that, for any operation $\Apply(\Invok)$
  invoked by process $p_i$, at most one memory location $\Cons$ is such that
  $\Cons.\Head = \Invok$. Indeed, suppose this is not the case, and let us
  consider the first two such memory locations, $\Cons_j$ and $\Cons_k$.
  Both were proposed
  on line~\ref{algo:waitfreelogCons:line:propose} by some processes $p_j$ and $p_k$. 
  As operations are totally ordered in a list, Process $p_k$ accessed $\Cons_j$
  before accessing $\Cons_k$. After accessing $\Cons_j$, $\Invok \notin \ToHelp_k$ 
  because of line~\ref{algo:waitfreelogCons:line:remove}, which contradicts
  the fact that $p_k$ proposed $\Invok$ on $\Cons_k$.
  
  Let us define the linearization point of any operation $\Apply(\Invok)$
  for which there is a memory location $\Cons$ such that $\Cons.\Head = \Invok$
  as the first step in which some process invoked $\Cons.\Propose$.

  We now prove that any operation $\Apply(\Invok)$ done by a terminating process
  $p_i$ has a linearization point, between its invocation and termination point.
  By the \emph{validity} property of $\Announcements$, and as all $\Invok$ values
  are different, no process proposes $\Invok$ before $p_i$ arrived in the system.
  By the \emph{suffixing} property
  of $\Announcements$, at the beginning of $p_i$'s loop, $\Invok \in \ToHelp_i$.
  When $p_i$ terminates, $\Invok \notin \ToHelp_i$. Therefore, $\Invok$ was removed
  on line~\ref{algo:waitfreelogCons:line:remove} of some iteration of the loop,
  and $\Cons$ at the beginning of the iteration is such that $\Cons.\Head = \Invok$
  at the end of the loop.

  Finally, operations are applied by $p_i$ on $s$ in the same order as they
  appear in the list (Lines \ref{algo:waitfreelogCons:line:exec1} and
  \ref{algo:waitfreelogCons:line:exec2}),
  which is the same order as their linearization points, which concludes the proof.
\end{proof}

The proof of wait-freedom (Lemma~\ref{Lemma:universalWaitfree}) is more challenging
because the proof of \cite{HS08} heavily relies on the fact that the number of processes is finite.

\begin{lemma}[Wait-freedom]
    \label{Lemma:universalWaitfree}
    All executions admissible by Algorithm~\ref{ALGO:waitfreelogCons} are wait-free.
\end{lemma}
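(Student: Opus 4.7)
The plan is to prove wait-freedom by contradiction. Suppose some process $p_i$ executes the main loop forever. Each iteration advances $\Cons_i$ one step along the linked list of consensus nodes starting at $\Operations$, so that list becomes infinite and, by Lemma~\ref{Lemma:universalLinearizable}, contains infinitely many distinct winning values. On the other hand $\ToHelp_i$ is finite and only loses elements, so it stabilizes on some non-empty residual; let $v^*$ be any element of that residual. Because $v^*$ is removed precisely when $p_i$ encounters a consensus whose winner equals $v^*$, and $p_i$ traverses every node of the infinite list, $v^*$ is never committed.

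The key observation is that $v^*$ appears in the returned sequence $\ToHelp_i$, so $p_{v^*}$'s call to $\Append$ has terminated, and Eventual Visibility can be applied to $v^*$: the set $E$ of processes whose terminated $\Append$ returned a sequence missing $v^*$ is finite.

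I then classify every committed winner $w$ at some consensus $c_k$ by the process $p_{j_k}$ that first succeeded in proposing at $c_k$. At that moment $w$ was the head of $\ToHelp_{j_k}$, so in particular $w \in \ToHelp_{j_k}$. Case~(a): $p_{j_k} \in E$. There are at most $|E|$ such proposers, and each wins at most $|\ToHelp_{j_k}|$ consensuses (one per distinct head value it ever holds), so this case contributes only finitely many winners. Case~(b): $p_{j_k} \notin E$, hence $v^* \in \ToHelp_{j_k}$. Since $v^*$ is never removed, it is still present in $\ToHelp_{j_k}$ when $p_{j_k}$ proposes, and as $w \neq v^*$ the head $w$ strictly precedes $v^*$ in $\ToHelp_{j_k}$. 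Considering the owner $p_w$ of the value $w$, Suffixing puts $w$ at the end of $\ToHelp_{p_w}$. If $v^*$ were also in $\ToHelp_{p_w}$, then $v^*$ would precede $w$ there, which together with the opposite order in $\ToHelp_{j_k}$ would violate Total Order; hence $v^* \notin \ToHelp_{p_w}$, i.e., $p_w \in E$. Since distinct winning values have distinct owners, case~(b) also contributes at most $|E|$ winners. Combining the two cases yields only finitely many committed values, contradicting the infinite list from the first paragraph.

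I expect the delicate step to be the case~(b) analysis, where Suffixing, Total Order, and Eventual Visibility must be combined on four different sequences ($\ToHelp_i$, $\ToHelp_{j_k}$, $\ToHelp_{p_w}$, and implicitly $\ToHelp_{v^*}$) to trap the owner of every problematic winner inside the finite exception set $E$. The remaining ingredients---the per-process bound $|\ToHelp_{j_k}|$ on wins, the distinctness of winners from Lemma~\ref{Lemma:universalLinearizable}, and the fact that the stabilized residual of $\ToHelp_i$ must contain a never-committed value---are routine.
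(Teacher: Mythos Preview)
Your proof takes essentially the same approach as the paper's: isolate a value $v^*=w_{i,k}$ that $p_i$ proposes forever, use Eventual Visibility on $v^*$ to bound the winning proposers whose $\ToHelp$ omits $v^*$ (your case~(a), the paper's first case), and for the remaining winners combine Suffixing and Total Order to force the \emph{owner} $p_w$'s sequence to omit $v^*$, contradicting Eventual Visibility a second time. You are somewhat more explicit than the paper in bounding the number of consensus wins per process in case~(a), but the decomposition, the key lemmas invoked, and the final double application of Eventual Visibility all match.
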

\begin{proof}
  Suppose there is an execution $\alpha$ admissible by Algorithm~\ref{ALGO:waitfreelogCons} that is not wait-free.
  It means that some process $p_i$ takes an infinite number of steps in $\alpha$. By the \emph{wait-freedom} property
  of $\Announcements$, $p_i$ enters the while loop after a finite number of steps, and each iteration of the
  loop terminates. Therefore, $p_i$ executes an infinite number of loop iterations.
  Let $w_i = w_{i,1} \cdot w_{i,2} \cdots w_{i,|w_i|}$ be the initial value of $\ToHelp_i$.
  As $w_i$ is finite and $p_i$ proposes some $\langle w_{i,k}, \bot\rangle$ at each iteration,
  there exists a value $k$ such that $p_i$ proposes $\langle w_{i,k}, \bot\rangle$ an infinite number of times.

  Let $win_0, win_1, \dots$ be the infinite sequence of the values taken by $\Winner_i$ during the execution,
  let $p_{\omega(j)}$ be the process that took the step on line~\ref{algo:waitfreelogCons:line:propose} installing
  the value $win_j$ in the consensus and let $p_{a(j)}$ be the process that invoked $\Apply(win_j)$.

  As processes $w_{\omega(j)}$ always propose the first invocation of $\ToHelp_{\omega(j)}$ that was not inserted in the list yet, 
  there is an infinite number of values $win_j$ such that either
  \begin{itemize}
  \item $w_{i,k}$ is not part of $w_{\omega(j)}$ or
  \item $w_{i,k}$ is part of $w_{\omega(j)}$, but appears after $win_j$ in the list. 
  \end{itemize}
  By the \emph{eventual visibility} property, the first case only concerns a finite number
  of $win_j$, so there is a finite number of values $win_j$ in the second case.

  For each of them, by the \emph{suffixing} property, $win_j = w_{a(j), |w_{a(j)}|}$,
  that is the process that invoked $\Apply(win_j)$ obtained $win_j$ at the last value of its $\ToHelp_{a(j)}$.
  By the \emph{total order} property, it is impossible that $p_{\omega(j)}$ obtains $win_j$ before $w_{i,k}$
  and $p_{a(j)}$ obtains $w_{i,k}$ before $win_j$. Therefore $w_{a(j)}$ does not contain $w_{i,k}$.
  However, this contradicts the \emph{eventual visibility} property that prevents an infinite number of $p_{a(j)}$
  processes to ignore $w_{i,k}$.

  This is a contradiction meaning the assumption of a non-wait-free execution is absurd. 
\end{proof}

\section{From Consensus to the Weak Log}
\label{section:consensus}

The main difficulty in the implementation of a weak log lies in the allocation of one memory location
per process, where it can safely announce its invoked operation. 
As it is impossible to allocate an infinite array at once, it is necessary to
build a data structure in which processes allocate their own piece of memory, and make it reachable 
to other processes, by winning a consensus. The list of Algorithm~\ref{ALGO:waitfreelogCons}
follows a similar pattern, but it poses a challenge: as an infinite number of processes access the
same sequence of consensus objects, one process may loose all its attempts to insert its own
node, breaking wait-freedom.

\begin{algorithm}[t]
  \SubAlgo{\Operation $\Append(v)$ \Is}{
    \tcp{add $v$ to the log}
    $\Node_i \gets \Last.\Read().\Propose(\langle\langle v,\bot \rangle,\bot\rangle)$\label{ALGO:wfclCons:line:propose}\;
    $\Last.\Write(\Node_i.\Tail)$\label{ALGO:wfclCons:line:writelast}\;
    \While(\label{ALGO:wfclCons:line:secondaryloop}){$\Node_i.\Head \neq v$}{
      $\Node_i \gets \Node_i.\Tail.\Propose(\langle v, \bot \rangle)$\label{ALGO:wfclCons:line:secondarypropose}\;
    }
    \tcp{read the log}
    $\Log_i \gets \varepsilon$;
    $\List_i \gets \First$;
    $\Node_i \leftarrow \List_i.\Head$\;
    \While(\label{ALGO:wfclCons:line:collectloop}){\True}{
      $\Log_i \leftarrow \Log_i \concat \Node_i.\Head$\label{ALGO:wfclCons:line:collectconcat}\;
      \lIf{$\Node_i.\Head = v$}{\Return $\Log_i$\label{ALGO:wfclCons:line:return}}
      $\Node_i \leftarrow \Node_i.\Tail$\label{ALGO:wfclCons:line:nodechange}\;
      \If{$\Node_i = \bot$}{
        $\List_i \gets \List_i.\Tail$\;
        $\Node_i \leftarrow \List_i.\Head$\label{ALGO:wfclCons:line:listchange};
      }
    }
  }
  \caption{Wait-free weak log using consensus}
  \label{ALGO:wfclCons}
\end{algorithm}

Algorithm~\ref{ALGO:wfclCons} solves this issue by using a novel feature,
that we call \emph{passive helping}: when a process wins a consensus,
it creates a side list to host values of processes concurrently competing
on the same consensus object. As only a finite number of processes have
arrived in the system when the consensus is won, a finite number of
processes will try to insert their value in the side list, which
ensures termination. Figure~\ref{figure:wfclCons} presents an
execution of Algorithm~\ref{ALGO:wfclCons}.

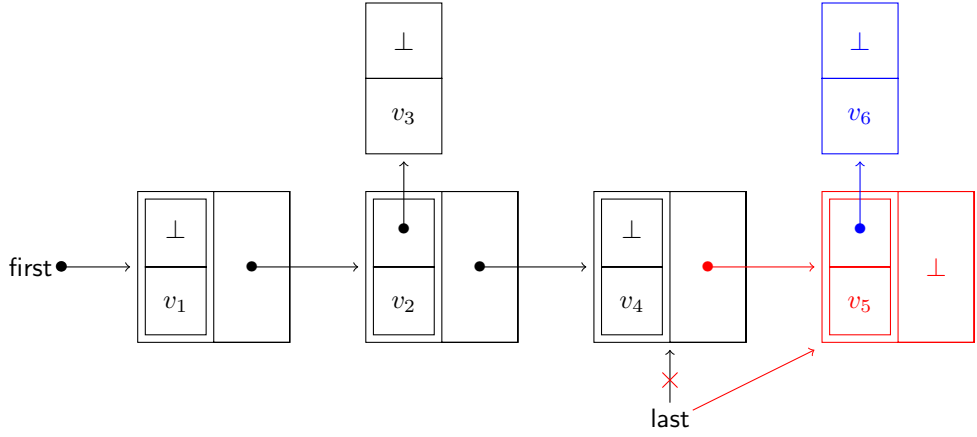
\begin{figure}[t]
  \centering
  \begin{tikzpicture}
  \draw (7,-1) node{\Last};
  \draw (-1,1) node[left]{$\First$};
  \draw (-1,1) node{$\bullet$};  

  \draw (0,0) rectangle +(2,2) rectangle +(1,0) +(0.1,0.1) rectangle +(0.9,1) rectangle +(0.1,1.9) +(0.5,0.5) node{$v_1$} +(0.5,1.5) node{$\bot$} ;
  \draw (3,0) rectangle +(2,2) rectangle +(1,0) +(0.1,0.1) rectangle +(0.9,1) rectangle +(0.1,1.9) +(0.5,0.5) node{$v_2$} +(0.5,1.5) ;
  \draw (6,0) rectangle +(2,2) rectangle +(1,0) +(0.1,0.1) rectangle +(0.9,1) rectangle +(0.1,1.9) +(0.5,0.5) node{$v_4$} +(0.5,1.5) node{$\bot$} ;
  \draw[red] (9,0) rectangle +(2,2) rectangle +(1,0) +(0.1,0.1) rectangle +(0.9,1) rectangle +(0.1,1.9) +(0.5,0.5) node{$v_5$} +(0.5,1.5) +(1.5,1) node{$\bot$};

  \draw (3,2.5) rectangle +(1,1) rectangle +(0,2) +(0.5,0.5) node{$v_3$} +(0.5,1.5) node{$\bot$};
  \draw[blue] (9,2.5) rectangle +(1,1) rectangle +(0,2) +(0.5,0.5) node{$v_6$} +(0.5,1.5) node{$\bot$};

  \draw[->] (7,-0.8) -- (7,-0.1);
  \draw[red, ->] (7.3,-0.9)  -- (8.9,-0.1);

  \draw[->] (-1,1) node{$\bullet$} -- (-0.1,1);
  \draw[->] (0,0) +(1.5,1) node{$\bullet$} -- +(2.9,1);
  \draw[->] (3,0) +(1.5,1) node{$\bullet$} -- +(2.9,1);
  \draw[red,->] (6,0) +(1.5,1) node{$\bullet$} -- +(2.9,1);

  \draw[->] (3,0) +(0.5,1.5) node{$\bullet$} -- +(0.5,2.4);
  \draw[blue,->] (9,0) +(0.5,1.5) node{$\bullet$} -- +(0.5,2.4);
  \draw[red] (7,-0.5) +(-0.1,-0.1) -- +(0.1,0.1) +(-0.1,0.1) -- +(0.1,-0.1);
\end{tikzpicture}
\caption{In this example,
  $p_5$ inserting $v_5$ and $p_6$ inserting $v_6$
  both read the same value
  $\langle\langle v_4, \bot \rangle, \bot \rangle$ of \textsf{last}.
  Process $p_5$ wins the consensus and inserts $v_5$ after $v_4$,
  and $p_6$ looses the consensus, so, it inserts $v_6$
  in the side list created by $p_5$.
}
\label{figure:wfclCons}
\end{figure}

Processes executing Algorithm~\ref{ALGO:wfclCons} share two variables: $\First$ and $\Last$ defined as follows.
\begin{itemize}
\item $\First$ is a consensus object that accepts values of the form 
$\List = \langle\List.\Head, \List.\Tail\rangle$ to be proposed,
where $\List.\Tail$
is a consensus object that stores values of the same type as $\First$.
$\List.\Head$ is a node of the side list of the form
$\Node = \langle \Node.\Head, \Node.\Tail\rangle$, where
$\Node.\Head$ is a value appended by some process and
$\Node.\Tail$ is a consensus object containing
values of the same type as $\Node$.
In other words, $\First$ references
the beginning of a list of lists of appended values. 
\item $\Last$ is a read/write register referencing a consensus object 
of the same type as $\First$, and initialized to the address of $\First$. 
In absence of concurrency, $\Last$ references the end of the list starting with $\First$. 
\end{itemize}

When process $p_i$ invokes $\Append(v)$, it first reads $\Last$, then
it proposes a list containing $v$ as its successor and it writes
the value returned by the consensus in $\Last$.
If $p_i$ loses the consensus, it inserts $v$ in the side list of the successor
(Lines~\ref{ALGO:wfclCons:line:secondaryloop}~and~\ref{ALGO:wfclCons:line:secondarypropose}).
After that, $p_i$ traverses the list of lists to build the sequence $\Log_i$ it returns ($\concat$ represents concatenation). 

Note that the consensus and the write on lines~\ref{ALGO:wfclCons:line:propose} and~\ref{ALGO:wfclCons:line:writelast}
are not done atomically. This means that a very old value can be written in $\Last$, in which case its value could move backward. The central property of the algorithm, proved by Lemma~\ref{LEMMA:lastprogress}, is that $\Last$ eventually moves forward, allowing very slow processes to find some place in a side list.

\begin{lemma}
    If an infinite number of processes execute line~\ref{ALGO:wfclCons:line:writelast}, then
    the number of processes that read the same \Last value at line \ref{ALGO:wfclCons:line:propose} is finite.
    \label{LEMMA:lastprogress}
\end{lemma}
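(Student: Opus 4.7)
The plan is to index the candidate values of $\Last$ along the main chain of consensus objects and run a downward descent on this index. Let $c_0 := \First$, and, once the consensus $c_k$ has a winner of the form $\langle \cdot, c\rangle$, let $c_{k+1} := c$. Two structural facts drive the argument. First, these $c_k$ are the only values $\Last$ can ever hold: $\Last$ is initialized to $c_0$, and every write on line~\ref{ALGO:wfclCons:line:writelast} stores $\Node_i.\Tail$, which by construction equals some $c_{k+1}$. Second, a process that reads $c_k$ on line~\ref{ALGO:wfclCons:line:propose} and subsequently reaches line~\ref{ALGO:wfclCons:line:writelast} necessarily writes $c_{k+1}$, so the set of processes that ever write $c_k$ is contained in the set of processes that read $c_{k-1}$.

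I will proceed by contradiction: suppose some $c_k$ is read by infinitely many processes at line~\ref{ALGO:wfclCons:line:propose}. The heart of the argument is to show that, under the hypothesis of the lemma, this forces infinitely many processes to also read $c_{k-1}$ whenever $k\ge 1$. To that end, partition the execution into maximal phases during which $\Last$ is constant, with the write steps on line~\ref{ALGO:wfclCons:line:writelast} as boundaries. Since infinitely many such writes occur, no phase is an infinite suffix of the execution, hence every phase contains only finitely many steps. For $k \ge 1$, a phase on which $\Last = c_k$ must begin with a write of $c_k$; so if only finitely many writes of $c_k$ occurred, there would be only finitely many such phases, each finite, contradicting infinitely many reads of $c_k$. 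Hence infinitely many processes write $c_k$, and by the second structural fact, infinitely many processes read $c_{k-1}$.

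Iterating this descent $k$ times yields infinitely many readers of $c_0$. But $\Last = c_0$ only during the initial phase of the execution, which ends as soon as line~\ref{ALGO:wfclCons:line:writelast} is executed for the first time. The lemma's hypothesis ensures this happens at a finite step, and the infinite arrival model guarantees that only finitely many processes have arrived by any finite step, giving the desired contradiction.

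The main obstacle is precisely the non-atomicity of the propose-then-write sequence highlighted just before the lemma: $\Last$ may jump backward along the chain when very slow processes finally carry out their write. The phase decomposition handles this cleanly: regardless of how erratic the sequence of written values is, $\Last$ can only take the value $c_k$ immediately after being written with $c_k$, and the number of such writes is controlled inductively by the preceding level of the chain.
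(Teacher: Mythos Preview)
Your proof is correct and follows essentially the same approach as the paper: both arguments descend (equivalently, induct) along the main chain $c_0, c_1, \ldots$, using that a writer of $c_{k+1}$ must have read $c_k$, together with the base case that $c_0 = \First$ is never written. Your phase decomposition makes explicit the step ``infinitely many reads of $c_k$ force infinitely many writes of $c_k$'' that the paper's proof states more tersely, but the underlying argument is the same.
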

\begin{proof}
We first prove by induction on the list of lists of the weak log that for all $\List = \langle\List.\Head, \List.\Tail\rangle$, the number of \Write operations of $\List$ in \Last at line \ref{ALGO:wfclCons:line:writelast} is finite.
\begin{itemize}
    \item Initially, \First is never written in \Last, because only decided values on line \ref{ALGO:wfclCons:line:propose} are written, and \First is never proposed.
    \item We now prove that, if the number of writes of $\List$ in \Last is finite, 
    then the number of writes of $\List.\Tail$ in \Last is finite.
    
    We prove the following contrapositive proposition: if the number of writes of $\List.\Tail$ in \Last is infinite, then the number of writes of $\List$ in \Last is infinite as well.
    
    In order to write $\List.\Tail$ in \Last, a process needs to read \List in \Last at line \ref{ALGO:wfclCons:line:propose}. As $\List.\Tail$ is written an infinite number of times and 
    \List is read an infinite number of times, then necessarily, \List is written an infinite number 
    of times as well.
    \end{itemize}
    
Let us now suppose that an infinite number of processes execute line~\ref{ALGO:wfclCons:line:writelast}, 
and that an infinite number of reads of \Last return $\List$.
This implies that there was an infinite number of \Write operations of $\List$ in \Last at line \ref{ALGO:wfclCons:line:writelast}, which contradicts the previous induction result.
\end{proof}

\begin{lemma}[Validity]
  All values in the sequence $w_i$ have been appended by some process.
\end{lemma}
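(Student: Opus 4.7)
The plan is to trace, for each value added to $w_i$, the node it came from and show that the node can only have been produced by an $\Append$ invocation passing that value as its argument. Since the only operations that insert content into the shared data structure are the two \Propose calls at lines~\ref{ALGO:wfclCons:line:propose} and~\ref{ALGO:wfclCons:line:secondarypropose}, and in both cases the proposed pair carries as its head the argument $v$ of the enclosing $\Append(v)$, a straightforward case analysis on where a traversed node can originate from suffices.

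First I would observe that every symbol concatenated into $\Log_i$ in line~\ref{ALGO:wfclCons:line:collectconcat} has the form $\Node_i.\Head$ for some node $\Node_i$ reached during the read phase. The node $\Node_i$ is obtained in one of three ways: (a) as $\List_i.\Head$ after the initial read $\List_i \gets \First$, (b) as $\List_i.\Head$ after an update $\List_i \gets \List_i.\Tail$, or (c) as $\Node_i.\Tail$ through the side-list traversal in line~\ref{ALGO:wfclCons:line:nodechange}. In cases (a) and (b), the value of $\List_i$ is the decided value of a consensus whose type accepts only pairs of the form $\langle\langle v,\bot\rangle,\bot\rangle$; in case (c), the value of $\Node_i$ is the decided value of a consensus whose type accepts only pairs of the form $\langle v,\bot\rangle$. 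In every case, $\Node_i.\Head$ is some $v$ extracted from a value that was proposed to a consensus somewhere in the algorithm.

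Next I would argue that the only lines of Algorithm~\ref{ALGO:wfclCons} that invoke \Propose are lines~\ref{ALGO:wfclCons:line:propose} and~\ref{ALGO:wfclCons:line:secondarypropose}, and both are reached inside an invocation $\Append(v)$ executed by some process $p_k$. Line~\ref{ALGO:wfclCons:line:propose} proposes $\langle\langle v,\bot\rangle,\bot\rangle$ and targets a list-of-lists consensus, so it is the only source of values for cases (a) and (b); line~\ref{ALGO:wfclCons:line:secondarypropose} proposes $\langle v,\bot\rangle$ and targets a side-list consensus, so it is the only source of values for case (c). In each propagation, the head of the proposed pair is exactly the argument $v = v_k$ that $p_k$ passed to its $\Append$ invocation. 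Combining this with the previous step, every value concatenated into $\Log_i$ equals some $v_k$ appended by process $p_k$, which is the claimed validity property.

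The only point that requires care is the bookkeeping of the two levels of lists: we must not conflate list-of-lists nodes with side-list nodes, since they have different types and arise from different \Propose sites. Once this distinction is kept straight, no induction or concurrency reasoning is needed — validity is a purely structural consequence of the fact that the shared memory only ever holds heads that were supplied as arguments to $\Append$.
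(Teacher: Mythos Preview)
Your proposal is correct and follows essentially the same approach as the paper: the paper's proof is a two-sentence version of your argument, observing that every value concatenated into $\Log_i$ is some $\Node.\Head$, and that such nodes can only be created on lines~\ref{ALGO:wfclCons:line:propose} or~\ref{ALGO:wfclCons:line:secondarypropose} using an appended value. Your case analysis on the three ways $\Node_i$ can be reached is a more detailed unpacking of the same structural observation, but adds no new idea.
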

\begin{proof}
    The value $\Log$ returned by the algorithm, is built by concatenation of values 
    $\Node.\Head$, that can only be created on line 
    \ref{ALGO:wfclCons:line:propose} or \ref{ALGO:wfclCons:line:secondarypropose},
    using an appended value.
\end{proof}

\begin{lemma}[suffixing]
  $w_{i,|w_i|} = v_i$ 
\end{lemma}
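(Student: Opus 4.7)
The plan is to trace what happens at the moment $p_i$ actually returns from $\Append(v)$. The suffixing property is conditional on termination, so I only need to show that whenever $p_i$ exits the procedure with some sequence $\Log_i$, the last element of $\Log_i$ equals $v$. I would not need to argue anything about the first loop (lines~\ref{ALGO:wfclCons:line:propose}--\ref{ALGO:wfclCons:line:secondarypropose}) for this particular property, since it only concerns insertion into the data structure, not the returned value.

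The key observation is that the algorithm has a unique return point, namely line~\ref{ALGO:wfclCons:line:return}, inside the collection loop starting at line~\ref{ALGO:wfclCons:line:collectloop}. This return is guarded by the condition $\Node_i.\Head = v$. Crucially, the \textbf{if} on this line is executed immediately after line~\ref{ALGO:wfclCons:line:collectconcat}, where $\Node_i.\Head$ was just concatenated to the tail of $\Log_i$. Therefore, at the moment the \Return is taken, the last value appearing in $\Log_i$ is precisely $\Node_i.\Head$, which by the guard equals $v$.

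So my proof would be a short three-line argument: (i) identify line~\ref{ALGO:wfclCons:line:return} as the only return statement; (ii) note that it is guarded by $\Node_i.\Head = v$; (iii) observe that on the immediately preceding line $\Log_i$ was extended by $\Node_i.\Head$, so $w_{i,|w_i|} = v$. No case analysis on whether $p_i$ won the consensus on line~\ref{ALGO:wfclCons:line:propose} or had to insert into a side list is required for suffixing; that distinction matters only for the existence of a node containing $v$, which in turn is relevant to eventual visibility and wait-freedom, not to suffixing.

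There is no real obstacle here, as suffixing is essentially a syntactic invariant of the return statement. The only subtlety worth mentioning is that, since all proposed values are distinct and each node's \Head is frozen by the first successful \Propose, the value $v$ cannot coincidentally appear at an earlier position in the log that $p_i$ traverses; but even without that uniqueness remark, the guard on line~\ref{ALGO:wfclCons:line:return} and the concatenation on line~\ref{ALGO:wfclCons:line:collectconcat} already suffice.
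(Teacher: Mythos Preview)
Your proposal is correct and is essentially the same argument as the paper's: the paper's proof is a one-liner observing that $v_i$ is appended to the end of $\Log_i$ at line~\ref{ALGO:wfclCons:line:collectconcat} just before the return on line~\ref{ALGO:wfclCons:line:return}. Your points (i)--(iii) spell out exactly this reasoning, and your additional remarks (about the first loop being irrelevant, and about value uniqueness) are accurate but, as you yourself note, unnecessary for suffixing.
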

\begin{proof}
  This is a direct consequence of the fact that $v_i$ is appended
  at the end of $\Log_i$ at line \ref{ALGO:wfclCons:line:collectconcat} just before the return statement on line  \ref{ALGO:wfclCons:line:return}.
\end{proof}

Definition~\ref{def:precedence} formalizes the order in which values are ordered in the weak log.
Intuitively, this order is the concatenation of all the side lists.
In Algorithm~\ref{ALGO:wfclCons}, the list of list is traversed in this precedence order, which ensures consistency of the order of all returned sequence (\emph{Total order} property of the weak log). 

\begin{definition}[Precedence]\label{def:precedence}
  A value $v_i$ precedes a value $v_j$ in the weak log if:
  \begin{itemize}
      \item $v_i$ is in a \List $l_i$ and $v_j$ is in a \List $l_j$ such that there exists a sequence of lists
      $\{l_1,\dots , l_n\}$ such that for all $l_k$, $l_k = \langle node_k,l_{k+1} \rangle$, 
      and $l_i = \langle node_i, l_1\rangle \wedge l_n = \langle node_n , l_j\rangle$.
      \item $v_i$ and $v_j$ are in the same \List $l$ and there exists a sequence of nodes $\{n_1, \dots , n_n \}$
      such that for all $n_k$, $n_k = \langle v,n_{k+1}\rangle$, $n_1=\langle v_i,n_2\rangle$ and $n_n = \langle v_j, n_{n+1}\rangle$.
  \end{itemize}
  
  We extend the notion of precedence to lists and nodes using the same definition.
\end{definition}

\begin{lemma}[Total order]
  If two processes $p_i$ and $p_j$ terminate their invocations,
  then all pairs of values that $w_i$ and $w_j$ both contain appear in the same order.
\end{lemma}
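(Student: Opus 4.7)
My plan is to reduce the total-order property to a structural invariant: the shared list-of-lists is effectively immutable once grown (because every forward pointer in it is the output of a consensus object), and therefore the precedence relation of Definition~\ref{def:precedence} is a well-defined total order on the values actually inserted. Granting this, it suffices to show that every terminating process returns its sequence in precedence order; the total-order property then follows immediately, because any pair of values common to $w_i$ and $w_j$ must appear in the unique order dictated by precedence.

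For the immutability step, I would observe that every value $v_k$ is installed at exactly one position in the structure. When $p_k$ executes line~\ref{ALGO:wfclCons:line:propose}, one of two things happens: either it wins the outer consensus and $v_k$ becomes the head of a fresh side list appended to the main list, or it loses and proceeds to line~\ref{ALGO:wfclCons:line:secondarypropose}, where it repeatedly proposes $\langle v_k,\bot\rangle$ on successive side-list consensus objects until one accepts it. Since each consensus object is decided at most once, the loop condition on line~\ref{ALGO:wfclCons:line:secondaryloop} ensures that $v_k$ appears in exactly one node at one fixed position; losing proposals have no effect on the structure. The precedence relation of Definition~\ref{def:precedence} is therefore a strict total order on the set of installed values, and this order is the same from the point of view of every process.

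For the traversal step, I would inspect the read phase on lines~\ref{ALGO:wfclCons:line:collectloop}--\ref{ALGO:wfclCons:line:listchange}: the loop appends $\Node_i.\Head$ to $\Log_i$ and then advances by following $\Node_i.\Tail$, which moves to the successor node inside the same side list (second bullet of Definition~\ref{def:precedence}); when that yields $\bot$, it follows $\List_i.\Tail.\Head$, moving to the first node of the next side list (first bullet). Hence successive entries of $\Log_i$ are in precedence order, and by transitivity every pair of entries of $\Log_i$ — in particular any pair also appearing in $\Log_j$ — occurs in precedence order. This rules out the forbidden configuration $k_i \le k_j$ with $l_i > l_j$ and proves the total-order property. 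The main obstacle is the immutability argument: one must be careful that a process which loses the outer consensus does not accidentally leave a stray copy of its value somewhere in the structure, which is exactly what the guard $\Node_i.\Head \neq v$ on line~\ref{ALGO:wfclCons:line:secondaryloop} combined with the idempotence of consensus objects rules out.
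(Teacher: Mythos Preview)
Your proposal is correct and follows essentially the same route as the paper: both arguments reduce the property to the observation that the read phase of Algorithm~\ref{ALGO:wfclCons} traverses the structure in the precedence order of Definition~\ref{def:precedence}, so any two values common to $w_i$ and $w_j$ are appended in that same order by both processes. You supply more detail than the paper does --- in particular the immutability/uniqueness argument justifying that precedence is well-defined --- but the underlying idea is identical.
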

\begin{proof}
    Let us remark that both processes $p_i$ and $p_j$ append values in their log following the precedence order
    defined by Definition~\ref{def:precedence}. Therefore, for any two values $v_k$ and $v_l$ that are both contained by $w_i$ and $w_j$,
    $p_i$ and $p_j$ have appended them at the end of the log in the same order, which proves the lemma.
\end{proof}

\begin{lemma}[Eventual visibility]
  If some process $p_i$ terminates its invocation, then the number of
  returned sequences that do not contain $v_i$ is finite.
\end{lemma}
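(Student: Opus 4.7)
If only finitely many invocations of $\Append$ ever terminate, the conclusion is immediate, so assume that infinitely many do. Each such invocation executes line~\ref{ALGO:wfclCons:line:writelast}, so the hypothesis of Lemma~\ref{LEMMA:lastprogress} holds throughout the argument. Let $L_m$ be the main-chain list in whose side list $v_i$ is finally installed, let $\ell$ be the position of $v_i$'s node inside that side list, and let $t_i$ be the (finite) step at which this installation occurs. The plan is to partition the terminating processes $p_j$ whose returned sequence does not contain $v_i$ into two groups and bound each group separately.

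The first group consists of those $p_j$ such that $v_j$ precedes $v_i$ in the precedence order of Definition~\ref{def:precedence}; by \emph{suffixing}, distinct terminating processes contribute distinct values, so it suffices to show that the set $P_i$ of values preceding $v_i$ is finite. The main chain contains only finitely many lists up to and including $L_m$, so I only need to bound the length of each of their side lists. Every node of the side list of $L_k$ (for $k\leq m$) is installed by a process that proposed on line~\ref{ALGO:wfclCons:line:propose} to the parent consensus $C_{k-1}$, either winning it (and becoming $L_k.\Head$) or losing it and subsequently winning a side-list consensus. Each such proposal is preceded on the same line by a read of $\Last$ returning $C_{k-1}$, so Lemma~\ref{LEMMA:lastprogress} bounds the number of proposals to $C_{k-1}$, hence the length of each side list, hence $|P_i|$, hence the size of the first group.

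The delicate part, and what I expect to be the main obstacle, is to bound the second group: terminating processes $p_j$ for which $v_i$ precedes $v_j$ yet whose returned sequence still misses $v_i$. Since the reading loop traverses the structure in precedence order, such a $p_j$ must cross $L_m$ without ever visiting $v_i$'s node, which is only possible if some read of $\Node.\Tail$ on the fixed chain $L_m.\Head\to\cdots\to v_i$'s node returns $\bot$, triggering the jump to $\List.\Tail$. However, installing $v_i$ at position $\ell$ forces every consensus on this chain to have been decided by step $t_i$, so any such $\bot$-returning read must occur strictly before $t_i$, meaning that the offending $p_j$ had already taken a step before the finite instant $t_i$. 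Since at any finite step only finitely many processes have yet taken any step, this second group is finite. Summing the two finite bounds establishes \emph{eventual visibility}.
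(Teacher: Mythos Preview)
Your proof is correct and rests on the same two ingredients as the paper's: Lemma~\ref{LEMMA:lastprogress} to bound the number of processes landing in any single side list, and the finiteness of the set of processes that have taken a step by a fixed instant. The paper's organization is slightly different---it argues by contradiction, first discarding the finitely many offending $p_j$ that started before $p_i$ returned (this is your group-2 bound in disguise), then applying pigeonhole over the finitely many main-chain lists up to $l_i$ together with Lemma~\ref{LEMMA:lastprogress}---but the underlying argument is the same.
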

\begin{proof}
  Let us denote by $l_i$ and $n_i$ the values of $\List_i$ and $\Node_i$ when $p_i$ terminates.
  
  Let us suppose (by contradiction) that there is an infinite number of processes whose returned sequence
  does not contain $v_i$, and an infinite number of them started their operation after $p_i$ returned.
  For each process $p_j$ of them, let $l_j$ and $n_j$ be the values of $\List_j$ and $\Node_j$ when $p_j$
  terminates its execution. As the collect loop respects the precedence order of Definition~\ref{def:precedence},
  for an infinite number of $p_j$, $l_j$ precedes $l_i$.
  As there is only a finite number of lists preceding $l_i$ ($p_i$ terminates), an infinite number of processes
  have the same value $l$ of $l_j$. All of them read the same value of \Last at line \ref{ALGO:wfclCons:line:propose} and wrote on line \ref{ALGO:wfclCons:line:writelast}.
  This contradicts Lemma \ref{LEMMA:lastprogress}.
\end{proof}

\begin{lemma}[Wait-freedom]
  No process takes an infinite number of steps in an execution.
\end{lemma}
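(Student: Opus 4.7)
The plan is to argue by contradiction: suppose some process $p_i$ executing $\Append(v_i)$ takes infinitely many steps. The call $\Append(v_i)$ splits into three parts, namely the initial propose-and-write on lines~\ref{ALGO:wfclCons:line:propose}--\ref{ALGO:wfclCons:line:writelast}, the secondary loop on lines~\ref{ALGO:wfclCons:line:secondaryloop}--\ref{ALGO:wfclCons:line:secondarypropose}, and the collect loop starting on line~\ref{ALGO:wfclCons:line:collectloop}. The first part executes a bounded number of steps, so $p_i$ must be stuck in one of the two loops, and I would treat each case separately, both reducing to Lemma~\ref{LEMMA:lastprogress}.

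For the secondary loop, let $l$ be the value of $\Last$ that $p_i$ read on line~\ref{ALGO:wfclCons:line:propose}. Since $p_i$ did not win the main consensus on $l$, it is competing inside the side list attached to that consensus by chaining proposes on $\Node_i.\Tail$. An infinite secondary loop forces $p_i$ to lose infinitely many of these distinct consensus objects, each to a different winner $p_j$. Each such $p_j$ itself read $\Last=l$ on line~\ref{ALGO:wfclCons:line:propose}, and in order to reach its own line~\ref{ALGO:wfclCons:line:secondarypropose} it must first have executed line~\ref{ALGO:wfclCons:line:writelast}. This produces infinitely many processes that both execute line~\ref{ALGO:wfclCons:line:writelast} and read the same value $l$ on line~\ref{ALGO:wfclCons:line:propose}, contradicting Lemma~\ref{LEMMA:lastprogress}.

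For the collect loop, I would first remark that $v_i$ has a fixed, finite position in the precedence order of Definition~\ref{def:precedence}: it sits in some side list of a main list $l$ that is at finite rank from $\First$. The loop strictly advances in precedence order and returns as soon as $\Node_i.\Head = v_i$, so its length is bounded by the number of nodes preceding $v_i$. That number is finite: the main list up to $l$ has finite rank, and each side list encountered along the way is finite by exactly the same argument as in the previous paragraph, since an infinite side list would again produce infinitely many processes reading the same $\Last$ on line~\ref{ALGO:wfclCons:line:propose} and executing line~\ref{ALGO:wfclCons:line:writelast}.

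The main obstacle that both cases share is to justify that every process contributing a node to a given side list has actually executed line~\ref{ALGO:wfclCons:line:writelast}, since Lemma~\ref{LEMMA:lastprogress}'s hypothesis is about line~\ref{ALGO:wfclCons:line:writelast}. The key structural observation is that line~\ref{ALGO:wfclCons:line:writelast} lies between lines~\ref{ALGO:wfclCons:line:propose} and~\ref{ALGO:wfclCons:line:secondarypropose} on every execution path, so a side-list contributor has necessarily written to $\Last$ at least once. Once that is settled, both cases close symmetrically by applying Lemma~\ref{LEMMA:lastprogress}.
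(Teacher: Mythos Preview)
Your proof is correct and follows essentially the same approach as the paper: argue by contradiction, split into the two loops, and in each case exhibit infinitely many processes that read the same value of $\Last$ on line~\ref{ALGO:wfclCons:line:propose} and wrote on line~\ref{ALGO:wfclCons:line:writelast}, contradicting Lemma~\ref{LEMMA:lastprogress}. Your final paragraph, making explicit that every side-list contributor must pass through line~\ref{ALGO:wfclCons:line:writelast} before reaching line~\ref{ALGO:wfclCons:line:secondarypropose}, is a useful clarification that the paper leaves implicit.
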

\begin{proof}

    Let us suppose that there exists an execution $\alpha$ such that process $p_i$ takes an infinite number of steps in $\alpha$ trying to append $v_i$.
    This means that either of the two loops (lines \ref{ALGO:wfclCons:line:secondaryloop} and \ref{ALGO:wfclCons:line:collectloop}) are looping an infinite number of times:
    \begin{itemize}
        \item If the loop at line \ref{ALGO:wfclCons:line:secondaryloop} loops for an infinite number of times, it means that $\Node.\Head \neq v_i$ for an infinite number of nodes.
        This implies that an infinite number of values are appended in the same list at line \ref{ALGO:wfclCons:line:secondarypropose}, which means that an infinite number of processes 
        read the same value at line \ref{ALGO:wfclCons:line:propose}, and wrote at line \ref{ALGO:wfclCons:line:writelast} which contradicts Lemma \ref{LEMMA:lastprogress}.
        \item If the loop at line \ref{ALGO:wfclCons:line:collectloop} loops for an infinite number of times, this means that $p_i$ never reads $v_i$, and as there is a finite number of lists that precedes the list $l_i$ in which $v_i$ has been appended, one of these lists contains an infinite number of nodes. All these nodes were created by processes reading the same value of \Last in line \ref{ALGO:wfclCons:line:propose}, which contradicts Lemma \ref{LEMMA:lastprogress}.
    \end{itemize}
    Both cases lead to a contradiction, so the algorithm is wait-free.
\end{proof}

\section{From Compare\&Swap to the Weak Log}
\label{section:CAS}

In multi-threaded environments, consensus is usually replaced by special operations like
compare\&swap. In this section, we replace read/write registers and consensus objects
by CAS registers. A CAS register $x$ provides two operations: a read operation $x$ working as the 
read operation of the immutable register and a compare\&swap operation $x.\CAS(\mathit{expect}, \mathit{update})$ that atomically checks if the
current value of $x$ is $\mathit{expect}$, changes it to $\mathit{update}$ and returns \True if it is the case.
If $x\neq \mathit{expect}$, then $x.\CAS(\mathit{expect}, \mathit{update})$ 
returns \False without changing the value of $x$.
On the one hand, like in the finite arrival model consensus objects and CAS registers are computationally equivalent: 
a consensus object can be easily simulated by
implementing $x.\Propose(v)$ as $x.\CAS(\bot, v)$, and conversely, compare\&swap can be
implemented using consensus as proved in the previous section. On the other hand,
compare\&swap is more flexible as it allows to write several times in the same register.

\begin{algorithm}[t]
  \SubAlgo{\Operation $\Append(v)$ \Is}{
    \tcp{add $v$ to the log}
    $\Next_i \leftarrow \Last$\;
    \If{$\lnot\Last.\CAS(\Next_i, \langle v, \Next_i\rangle)$}{
      \Repeat {$\First_i.\Tail.\CAS(\Next_i, \langle v, \Next_i\rangle)$}{
        $\First_i \leftarrow \Next_i$\;
        $\Next_i \leftarrow \First_i.\Tail$\;
      }
    }
    \tcp{read the log}
    $\Log_i \leftarrow v$\;
    \While{$\Next_i \neq \Empty$}{
        $\Log_i \leftarrow \Next_i.\Head \concat \Log_i$\;
        $\Next_i \leftarrow \Next_i.\Tail$\;
    }
    \Return $\Log_i$\;
  }
  
  \caption{Wait-free weak log using compare\&swap}
  \label{ALGO:wfclCas}
\end{algorithm}

\begin{figure}[t]
  \centering
\begin{tikzpicture}
  \draw (4,-0.8) node{\Last};
  \draw[blue] (1.5,1.5) rectangle +(1,1) rectangle +(2,0) +(0.5,0.5) node{$v_5$} +(1.5,0.5) node{$\bullet$};
  \draw[red] (0,0) rectangle +(1,1) rectangle +(2,0) +(0.5,0.5) node{$v_4$} +(1.5,0.5) node{$\bullet$};
  \draw (3,0) rectangle +(1,1) rectangle +(2,0) +(0.5,0.5) node{$v_3$} +(1.5,0.5) node{$\bullet$};
  \draw (6,0) rectangle +(1,1) rectangle +(2,0) +(0.5,0.5) node{$v_2$} +(1.5,0.5) node{$\bullet$};
  \draw (9,0) rectangle +(1,1) rectangle +(2,0) +(0.5,0.5) node{$v_1$} +(1.5,0.5) node{$\bullet$};
  \draw (12.5,0.5) node{\Empty};

  \draw[->] (4,-0.8) -- (4,-0.1);
  \draw[red, ->] (3.6,-0.75) -- (1.1,-0.1);

  \draw[red,->] (0,0) +(1.5,0.5) -- +(2.9,0.5);
  \draw[->] (3,0) +(1.5,0.5) -- +(2.9,0.5);
  \draw[->] (6,0) +(1.5,0.5) -- +(2.9,0.5);
  \draw[->] (9,0) +(1.5,0.5) -- +(2.9,0.5);

  \draw[blue, ->] (0,0) +(1.5,0.5) -- +(2.4,1.4);
  \draw[blue, ->] (1.5,1.5) +(1.5,0.5) -- (3.9,1.1);

  \draw[blue] (2.5,0.5) +(-0.1,-0.1) -- +(0.1,0.1) +(-0.1,0.1) -- +(0.1,-0.1);
  \draw[red] (4,-0.4) +(-0.1,-0.1) -- +(0.1,0.1) +(-0.1,0.1) -- +(0.1,-0.1);
\end{tikzpicture}
\caption{In this example, $p_4$ inserting $v_4$ and $p_5$ inserting $v_5$ both read the same value $v_3$ of \textsf{last}.
  Process $p_4$ wins its compare\&swap and inserts $v_4$ before $v_3$, and $p_5$ looses the compare\&swap to $p_4$ and inserts
  $v_5$ after $v_4$.}
\label{figure:wfclCas}
\end{figure}
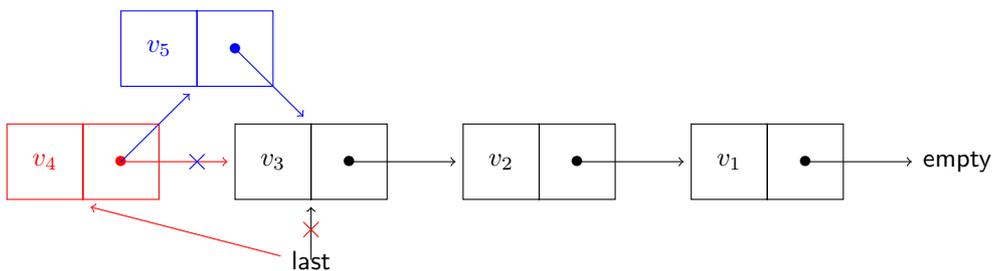

Algorithm \ref{ALGO:wfclCas} presents a simplified implementation of the weak log based on
CAS registers instead of consensus objects. There are two main differences between Algorithm
\ref{ALGO:wfclCas} and Algorithm \ref{ALGO:wfclCons}. First, values are stored in a list
instead of a list of lists. Second, the list is managed as a stack, and not as a queue.
Figure~\ref{figure:wfclCas} illustrates an execution of the algorithm. 

Processes share a unique CAS register $\Last$, that stores either the initial value, $\Empty$,
or nodes of the form $\langle \Head, \Tail \rangle$, where $\Head$ is a value appended by
a process and $\Tail$ has the same type as $\Last$. 

When a process invokes $\Append(v)$, it first attempts to add $v$ at the top of the stack,
by winning a compare\&swap on $\Last$. If this attempt fails and a value $v'$ was
inserted concurrently by $p_j$, $p_i$ continuously tries to inserts $v$ after $v'$. Similarly
to Algorithm \ref{ALGO:wfclCons}, $p_i$ eventually succeeds because it only competes
in this task with the finite set of processes that read $\Last$ before $p_j$ won its
compare\&swap.
After $p_i$ inserted $v$ in the list, it traverses the list to its end to build its return sequence. 
Traversal is done in a last-in-first out manner, so read values are appended at the beginning of $\Log_i$. 

\section{Conclusion}
\label{section:conclusion}

Consensus is a central problem in distributed computing,
because it allows wait-free linearizable implementations of all objects
with a sequential specification, in systems composed of $n$ asynchronous processes
that may crash. 
In this paper, we asked the question of whether the result still hold
in the infinite arrival model, in which a potentially infinite number of processes 
can arrive and leave during an execution. We answered this question positively by
giving a wait-free and linearizable universal construction using only
consensus objects and read/write registers.

Our proposed construction is based on two lists containing all the
operations invoked on the object. An interesting question is whether
it is possible to reduce the cost in memory by eventually removing
the operations from both lists. Our approach does not allow
such optimizations, as it relies on a ``passive helping'' mechanism,
in which processes winning a consensus instance allocate memory locations
to host the value of potential processes that loose
all the consensus instances they are involved in.

\section{Acknowledgements}
This work was partially supported by the French ANR project 16-CE25-0005 O'Browser
devoted to the study of decentralized applications on Web browsers.

\bibliography{infiniteCN}

\begin{thebibliography}{10}

\bibitem{afek2011bounded}
Yehuda Afek, Adam Morrison, and Guy Wertheim.
\newblock From bounded to unbounded concurrency objects and back.
\newblock In {\em Proceedings of the 30th annual ACM SIGACT-SIGOPS symposium on
  Principles of distributed computing}, pages 119--128. ACM, 2011.

\bibitem{aspnes2002wait}
James Aspnes, Gauri Shah, and Jatin Shah.
\newblock Wait-free consensus with infinite arrivals.
\newblock In {\em Proceedings of the thiry-fourth annual ACM symposium on
  Theory of computing}, pages 524--533. ACM, 2002.

\bibitem{castaneda:hal-01660646}
Armando Casta{\~n}eda, Michel Raynal, and Sergio Rajsbaum.
\newblock {Long-Lived Tasks}.
\newblock In {\em {NETYS 2017 - 5th International Conference on NETworked
  sYStems}}, volume 10299, pages 439--454, Marrakech, Morocco, May 2017.
  {Springer}.
\newblock URL: \url{https://hal.archives-ouvertes.fr/hal-01660646}, \href
  {http://dx.doi.org/10.1007/978-3-319-59647-1\_32}
  {\path{doi:10.1007/978-3-319-59647-1\_32}}.

\bibitem{censor2015help}
Keren Censor-Hillel, Erez Petrank, and Shahar Timnat.
\newblock Help!
\newblock In {\em Proceedings of the 2015 ACM Symposium on Principles of
  Distributed Computing}, pages 241--250. ACM, 2015.

\bibitem{chockler2005active}
Gregory Chockler and Dahlia Malkhi.
\newblock Active disk paxos with infinitely many processes.
\newblock {\em Distributed Computing}, 18(1):73--84, 2005.

\bibitem{FK14}
Panagiota Fatourou and Nikolaos~D. Kallimanis.
\newblock Highly-efficient wait-free synchronization.
\newblock {\em Theory Comput. Syst.}, 55(3):475--520, 2014.

\bibitem{FLP85}
Michael~J. Fischer, Nancy~A. Lynch, and Mike Paterson.
\newblock Impossibility of distributed consensus with one faulty process.
\newblock {\em J. {ACM}}, 32(2):374--382, 1985.

\bibitem{H88}
Maurice Herlihy.
\newblock Impossibility and universality results for wait-free synchronization.
\newblock In {\em Proceedings of the Seventh Annual {ACM} Symposium on
  Principles of Distributed Computing, Toronto, Ontario, Canada, August 15-17,
  1988}, pages 276--290, 1988.

\bibitem{herlihy1991wait}
Maurice Herlihy.
\newblock Wait-free synchronization.
\newblock {\em ACM Transactions on Programming Languages and Systems (TOPLAS)},
  13(1):124--149, 1991.

\bibitem{HRR13}
Maurice Herlihy, Sergio Rajsbaum, and Michel Raynal.
\newblock Power and limits of distributed computing shared memory models.
\newblock {\em Theor. Comput. Sci.}, 509:3--24, 2013.

\bibitem{HS08}
Maurice Herlihy and Nir Shavit.
\newblock {\em The art of multiprocessor programming}.
\newblock Morgan Kaufmann, 2008.

\bibitem{HW90}
Maurice Herlihy and Jeannette~M. Wing.
\newblock Linearizability: {A} correctness condition for concurrent objects.
\newblock {\em {ACM} Trans. Program. Lang. Syst.}, 12(3):463--492, 1990.

\bibitem{merritt2003resilient}
Michael Merritt and Gadi Taubenfeld.
\newblock Resilient consensus for infinitely many processes.
\newblock In {\em International Symposium on Distributed Computing}, pages
  1--15. Springer, 2003.

\bibitem{R13}
Michel Raynal.
\newblock {\em Concurrent Programming - Algorithms, Principles, and
  Foundations}.
\newblock Springer, 2013.

\bibitem{R17}
Michel Raynal.
\newblock Distributed universal constructions: a guided tour.
\newblock {\em Bulletin of the {EATCS}}, 121, 2017.

\bibitem{T18}
Gadi Taubenfeld.
\newblock {\em Distributed Computing Pearls}.
\newblock Synthesis Lectures on Distributed Computing Theory. Morgan {\&}
  Claypool Publishers, 2018.
\newblock \href {http://dx.doi.org/10.2200/S00845ED1V01Y201804DCT014}
  {\path{doi:10.2200/S00845ED1V01Y201804DCT014}}.

\end{thebibliography}

\end{document}